\newtheorem{theorem}{Theorem}
\newtheorem{lemma}{Lemma}
\newtheorem{definition}{Definition}
\begin{document}

\title{Equivalence between exponential concentration in quantum machine learning kernels and barren plateaus in variational algorithms}

\author{Pranav Kairon}
\affiliation{Department of Physics and Astronomy, University of British Columbia, Vancouver, B.C., Canada}
\affiliation{Stewart Blusson Quantum Matter Institute, Vancouver, B.C., Canada}

\author{Jonas Jäger}
\affiliation{Department of Computer Science and Institute of Applied Mathematics, University of British Columbia, Vancouver, B.C., Canada}
\affiliation{Stewart Blusson Quantum Matter Institute, Vancouver, B.C., Canada}

\author{Roman V. Krems}
\affiliation{Stewart Blusson Quantum Matter Institute, Vancouver, B.C., Canada}
\affiliation{Department of Chemistry, University of British Columbia, Vancouver, B.C., Canada}
\email{rkrems@chem.ubc.ca}

\date{\today}

\begin{abstract}
We show that a parametrized quantum neural network can be used to build a non-parametric quantum kernel for machine learning that inherits the concentration properties of the neural network cost function. This {construction} establishes a rigorous connection between barren plateaus (BP) in variational quantum algorithms and exponential concentration of quantum kernels for machine learning.
Our results imply that recently proposed algorithms for building BP-free quantum circuits can be utilized to construct useful quantum kernels for machine learning with unknown inductive bias.
This is illustrated by a numerical example employing a provably BP-free quantum neural network to construct kernel matrices for multiple classification datasets of increasing dimensionality.
\end{abstract}

\maketitle
\section{Introduction}

Machine learning with quantum kernels \cite{qk_real1,qk_real3,qkreal2_hubregtsen2022training, QK_QNN_same_schuld2021quantum,QML_maria_killoran} and variational quantum algorithms (VQAs) \cite{cerezo_variational_2021} are two distinct paradigms currently explored for computation with noisy quantum devices \cite{nisq_review_1,nisq_review_2,nisq_complexity_2022,cerezo_challenges_2022,biamonte_quantum_2017}. VQAs are restricted by barren plateaus (BPs) \cite{mcclean_barren_2018,larocca2024review}. BPs are regions in the variational parameter space where gradients of the cost function vanish exponentially with the number of qubits, leading to increasingly narrow gorges of optimal solutions \cite{arrasmith2022equivalence}. This implies that an exponentially large number of measurements is required to estimate cost functions for VQAs with BP, as these estimates are otherwise dominated by the statistical measurement error (i.e., shot noise). The origin of BPs can be traced to the vanishing of inner products between quantum states in exponentially large Hilbert spaces. Exponential vanishing of inner products also hinders applications of quantum computing for estimating kernels for machine learning models \cite{ QML_maria_killoran,havlicek_supervised_2019}.
In kernel methods of machine learning, models are based on kernel matrices \cite{havlicek_supervised_2019}, which in quantum machine learning (QML), are estimated by measurements on a quantum computer \cite{qk_real1,qkreal2_hubregtsen2022training,qk_real3}.  Refs. \cite{kubler2021inductive,thanasilp_exponential_2022} show that both the expected value and variance of the
off-diagonal elements of quantum kernel matrices decrease exponentially as the number of qubits used for building quantum kernels grows. This leads to exponential concentration (EC) of the kernel matrix around a diagonal matrix, implying that quantum kernels effectively reduce to $\delta$-functions in a large Hilbert space and making quantum machine learning impractical.

Since BPs preclude the quantum advantage of VQAs \cite{mcclean_barren_2018}, a significant focus of recent work has been on how to engineer quantum circuits for VQAs without BPs.
It has been shown that BPs can be avoided by restricting the depth of quantum circuits
\cite{BP_avoid_shallow}, by building equivariance into the quantum ansatz \cite{equivarint_nguyen_theory_2022}, as well by employing specific initialization strategies for variational optimization \cite{BP_avodi_1, BP_avoid_2, BP_avoid_3, BP_avoid_4, BP_avoid_5}.
Previous work on how to prevent EC of quantum kernels is much more scarce.
The proposed approaches include covariant quantum kernels \cite{glick_covariant_2022}, quantum kernel bandwidth methods \cite{EC_avoid_bandwidth}, quantum Fisher kernels \cite{EC_avoid_fisher_suzuki_quantum_2022}, and projected quantum kernel methods \cite{huang_power_2021}. All these methods encode inductive bias into the quantum models, thereby restricting learning to a particular subspace of the Hilbert space. A rigorously proven relation between BPs in VQAs and concentration of kernel matrices is required to leverage and transfer results from the BP literature to quantum kernels.\looseness=-1

It has been assumed that BPs and EC of quantum kernels are different manifestations of the same problem \cite{wang_noise-induced_2021,express_EC,BP_ent,Uvarov_2021,thanasilp_exponential_2022,larocca2024review}. Thanasilp \textit{et al.} \cite{thanasilp_exponential_2022} rigorously showed how exponential kernel concentration emerges from sources such as expressivity, entanglement, global measurements and hardware noise, which are known sources for BPs. However, Ref.~\cite{thanasilp_exponential_2022}, while limited to a case-by-case study, primarily identifies regimes in which quantum kernels fail (i.e., presence of EC). No rigorous proof of a general\textcolor{black}{ized} connection, or limitations thereof, between BPs in VQAs and EC in quantum kernels has been presented to the best of our knowledge.

\textcolor{black}{Here, we establish the necessary structural bridge between these distinct frameworks by introducing a specific construction that maps between a VQA and a quantum kernel.}
\sout{Here, }\textcolor{black}{Under this explicit mapping,} we formalize the relation between BPs and EC by transferring training bounds derived for VQAs to QML using quantum kernels. We generalize the existing intuition of the emergence of BPs and EC from the same sources \cite{thanasilp_exponential_2022} \textcolor{black}{in}to a bi-directional equivalence ($\mathrm{BP} \iff \mathrm{EC}$) \textcolor{black}{that holds specifically for these induced kernels.} Importantly, this \textcolor{black}{targeted correspondence} allows the transfer of BP-free guarantees to the kernel setting to ensure EC-free models.

There are two possible implications of BPs for quantum kernels. First, training a machine learning model with parametrized kernels, which entails an optimization of kernel parameters, can be directly formulated as a variational problem \cite{qkreal2_hubregtsen2022training,Innan2023}. This optimization problem may suffer from BPs \cite{thanasilp_exponential_2022}. The second implication is EC of quantum kernels. In order to separate these two problems, we consider fixed, non-parametric quantum kernels and examine the effect of BP, or lack thereof, on learning with such quantum kernels constructed from the corresponding quantum neural network.

Schuld  \textit{et al.} \cite{QK_QNN_same_schuld2021quantum} established connections between quantum kernel methods and VQAs. Here, we use a similar approach to demonstrate how training bounds derived for BPs specific to a particular circuit ansatz can, under specific conditions, be extended to overcome EC of quantum kernels. This is achieved by constructing an analogous quantum kernel from the variational ansatz. We then prove that these upper and lower bounds on the cost function can be related to this newly constructed kernel. In particular, we show that if an ansatz leads to BP-free VQAs with a global observable \cite{invariant_schatzki_theoretical_2022}, then there exists such an analogous quantum kernel that is guaranteed not to suffer from EC. We also provide a numerical illustration of Theorem~\ref{theorem1}. Specifically, we construct quantum kernel matrices using the BP-free permutation-equivariant ansatz of Ref.~\cite{invariant_schatzki_theoretical_2022} via Eq.~\eqref{eq: new_kernel}, and benchmark it against the feature map of Havl'i\v{c}ek \textit{et al.} ~\cite{havlicek_supervised_2019}. Across MNIST and additional synthetic classification datasets, we observe that the BP-free construction avoids exponential concentration in the off-diagonal kernel entries, with variance scaling that remains stable under polynomial shot budgets.

\section{\label{sec: qml_theory}Definitions}

For a supervised learning task, the dataset is given by $\mathcal{D} := {(\textbf{x}_i, \textbf{y}_i)}_{i=1}^{N_s}$, where $\textbf{x}_i \in \mathcal{X}$ are input vectors, $\textbf{y}_i \in \mathcal{Y}$ are the labels and $N_s$ is the number of data points.
The inputs are related to the labels via a black box function $f: \mathcal{X} \rightarrow \mathcal{Y}$, and the task is to determine a QML model $h_{\boldsymbol{\theta}}$ parameterized by $\boldsymbol{\theta}$ such that $h_{\boldsymbol{\theta}}: \mathcal{X} \rightarrow \mathcal{Y}$ approximates $f$. Note that this work focuses on classical data, unlike quantum data, where the inputs form quantum states \cite{Schuld2021_qml}.

We consider an $n$-qubit system, and a quantum feature map $\phi$, that encodes $\mathbf{x}_{i}$ into a quantum state in the $n$-qubit Hilbert space, associated with the pure density matrix $\rho(\mathbf{x}_i) = | \phi(\mathbf{x}_i) \rangle \langle \phi(\mathbf{x}_i) |$. The feature map $\phi$ can be implemented through a unitary quantum circuit $U(\mathbf{x})$ acting on a common initial state $\rho_0$. The data embedding $U(\mathbf{x})$ acts as a generator of a unitary ensemble, defined over all possible input data vectors $\mathbf{x}_i \in \mathcal{X}$. From a functional analysis perspective, we have that $U: \mathcal{X} \rightarrow \mathbb{U}_x \subseteq \mathcal{U}(2^n)$, where $\mathcal{U}(2^n)$ is the total space of unitaries and $\mathbb{U}_x = \{ U(\mathbf{x}) | \mathbf{x} \in \mathcal{X} \}$. The density matrices live in the $2^n$-dimensional Hilbert space $\mathcal{H}$, where the inner product is defined as $\bra{\rho}\ket{\rho^\prime}_{\mathcal{H}}=\text{Tr}(\rho\rho^\prime)$, i.e., the Hilbert-Schmidt inner product. An  inner product of quantum states in $\mathcal{H}$ can be be used to define a kernel function $\kappa(\mathbf{x}, \mathbf{x}^{\prime})$ of a reproducing kernel Hilbert space,
\begin{align}\label{eq:qkmain}
    \kappa(\mathbf{x}, \mathbf{x}^{\prime}) = |\langle \phi(\mathbf{x})|\phi(\mathbf{x}^{\prime}) \rangle|^2 = |\langle 0^{\otimes n}\mid U^{\dagger}(\mathbf{x})U(\mathbf{x}^{\prime})\mid 0^{\otimes n}\rangle|^2 = \operatorname{Tr}(\rho(\mathbf{x})\rho(\mathbf{x}^{\prime}))
\end{align}
where $\rho(\mathbf{x}) = U(\mathbf{x}) \rho_0 U^\dagger(\mathbf{x})$ and $\rho(\mathbf{x}^{\prime}) = U(\mathbf{x}^\prime) \rho_0 U^\dagger(\mathbf{x}^\prime)$ are the density matrices corresponding to the quantum states representing the datapoints $\mathbf{x}$ and $\mathbf{x}^{\prime}$ \cite{havlicek_supervised_2019, QML_maria_killoran}, and  $\rho_0=|0\rangle\langle0|^{\otimes^{n}}$. The sequence $U^{\dagger}(\mathbf{x})U(\mathbf{x}^{\prime})$ in Eq.~(\ref{eq:qkmain}) can be used to implement a kernel on a quantum computer, as illustrated by the diagram of the quantum circuit used to evaluate kernel entries in Fig.~\ref{fig: 1}.


\begin{figure}
  \centering
  \includegraphics[width=0.99\columnwidth]{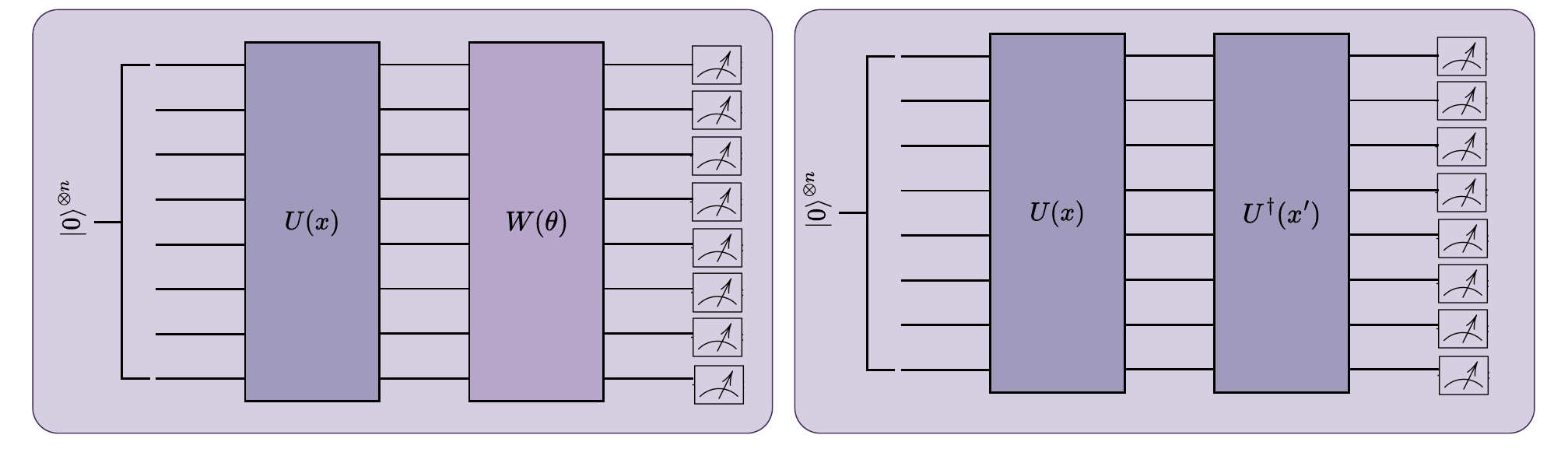}
  \caption{Schematic quantum circuit for a QNN (left) and a quantum kernel (right). The final output is the expectation value of $|0\rangle\langle0|^{\otimes{n}}$. }
  \label{fig: 1}
\end{figure}

EC of quantum kernels is defined as follows.
\begin{definition}
\rm {\bf (Deterministic)} Deterministic exponential concentration of quantum kernel $\kappa(\mathbf{x},\mathbf{x}^{\prime})$ in the number of qubits $n$ is defined as
    \begin{align*}
    & \mid \kappa(\mathbf{x},\mathbf{x}^{\prime}) - \mu \mid \le \beta, \quad \beta \in O(1/b^n)
    \\
    & \mu=\mathbb{E}_{{\mathbf{x}},{\mathbf{x}}^{\prime}}[\kappa(\mathbf{x},\mathbf{x}^{\prime})]
    \end{align*}
for some $b>1$ and all pairs of data points $\mathbf{x},\mathbf{x}^{\prime}$.
\end{definition}
\begin{definition}
\rm {\bf (Probabilistic)}
Probabilistic concentration of quantum kernel $\kappa(\mathbf{x},\mathbf{x}^{\prime})$ is
\begin{equation}\mathrm{Pr}_{\mathbf{x},\mathbf{x}^{\prime}}[ \mid \kappa(\mathbf{x},\mathbf{x}^{\prime}) - \mu \mid \geq \delta ]  \leq  \frac{\beta^2}{\delta^2},\quad \beta \in O(1/b^n)\end{equation} Here we have made use of Chebyshev's inequality:
\begin{equation}
    \Pr(| \kappa(\mathbf{x},\mathbf{x}^{\prime}) - \mu | \geq \delta) \leq \frac{\operatorname{Var}_{\mathbf{x},\mathbf{x}^{\prime}}[\kappa(\mathbf{x},\mathbf{x}^{\prime})]}{\delta^2}
\end{equation}
\end{definition}
Note that probabilistic kernel concentration is more general than deterministic since any deterministically concentrated kernel is also probabilistically concentrated but the reverse is generally not true.
EC can also be observed in the eigenspectrum of the kernel matrix. The framework proposed in Ref.~\cite{canatar_spectral_2021} considers the quantity of samples necessary for a model to learn a target function. The learning requirement for each eigenmode of the kernel is determined by its associated eigenvalue. Consequently, a ``flat" spectrum indicates a requirement of an exponential amount of data to learn each eigenmode, which leads to a high generalization error. A numerical demonstration of exponential concentration is given in Fig.~\ref{fig: eckernel}.

\begin{figure}
  \centering
  \includegraphics[width=0.95\columnwidth]{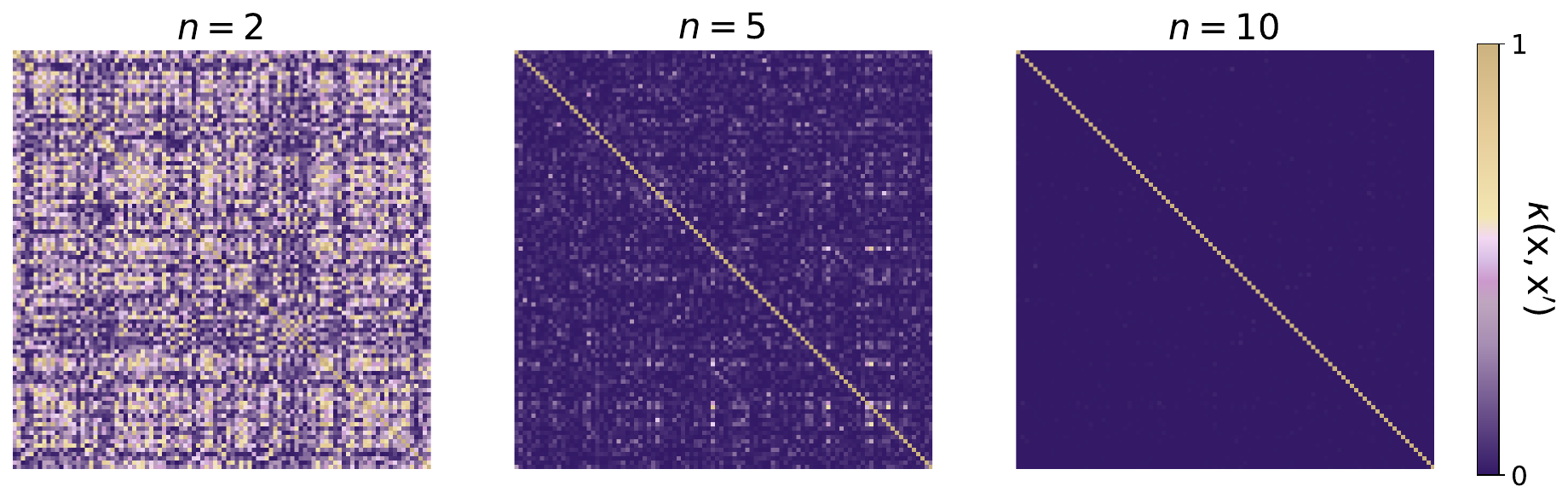}
  \caption{Exponential concentration of a quantum kernel: comparison of the kernel matrices constructed using an ansatz proposed by Havlíček  \textit{et al.} \cite{havlicek_supervised_2019} for binary classification on MNIST dataset \cite{mnist}. The matrices are constructed for 100 data points with the number of dimensions reduced to the corresponding number of qubits by principal component analysis. Each dimension is encoded into a single independent qubit. The kernel matrices shown are for qubit numbers $n = 2,5,10$.}
  \label{fig: eckernel}
\end{figure}

The present work establishes the connection between EC of quantum kernels and BPs through properties of the cost function of a quantum neural network (QNN).
QNNs are a family of parameterized circuits that have been used for regression and classification tasks in quantum machine learning \cite{dymtroQNN_beer_training_2020, QNN_review_jeswal_recent_2019}. As shown in Fig.~\ref{fig: 1}, they can be defined using a data-based embedding $U_\mathbf{x}$ for $\mathbf{x} \in \mathcal{X}$ followed by variational embedding $W(\boldsymbol{\theta})$ with $m$ parameters $\boldsymbol{\theta} \in [ -\pi,\pi ]^m$.
The most general cost function for QNNs can be defined as follows:
\begin{equation}
\begin{split}
\mathcal{C}(\boldsymbol{\theta},\mathcal{X}) &= \sum_{i=1}^{N_s} c_{i}\text{Tr}[W(\boldsymbol{\theta})\rho(\mathbf{x}_i)W^{\dagger}(\boldsymbol{\theta})O] \\
&= \sum_{i=1}^{N_s} c_{i}\text{Tr}[W(\boldsymbol{\theta}) U(\mathbf{x}_i) \rho_0 U^\dagger(\mathbf{x}_i) W^{\dagger}(\boldsymbol{\theta})O ] = \sum_{i=1}^{N_s} c_{i}~C(\boldsymbol{\theta} , \mathbf{x}_i)
\end{split}
\label{eq:qnn_cost}
\end{equation}
where $O$ is a measurement operator. In this work, $O$ is a global measurement operator, acting on all qubits in the circuit in order to relate the cost function to quantum state overlaps serving as the kernel function.

BPs are related  to concentration of the QNN cost function with respect to variational parameters of QNN. Mathematically, one can formulate BPs in terms of variance (and mean) of the partial derivatives of the cost function, as follows.
\begin{definition}
    \rm {\bf (Barren Plateaus)}. The cost function defined for QNN in Eq.~(\ref{eq:qnn_cost}) exhibits a BP if the variance of the derivative of the cost function with respect to $\theta_{\mu} \in \boldsymbol{\theta}$ is exponentially decaying in the number of qubits:
    \begin{equation}
        \mathrm{Var}_{\boldsymbol{\theta}}[\partial_\mu C(\boldsymbol{\theta},\mathbf{x}_i)] \leq F(n), \quad  F(n) \in O(1/b^n)
    \end{equation}
\end{definition}

BPs were first introduced in Ref.~\cite{mcclean_barren_2018}, where the source of the exponential suppression was identified to be the high expressivity of quantum circuits, which was termed as randomness-induced BP. This affects both gradient-free \cite{grad_free_BP} and gradient-based \cite{grad_based_BP} optimization approaches. Specifically, it was argued that random initialization of $\boldsymbol{\theta}$ for a given $W(\boldsymbol{\theta})$ such that it forms a 2-design \cite{2design1, 2design2,mcclean_barren_2018} leads to a BP. Cerezo \textit{et al.} \cite{cerezo_cost_2021} also studied BPs in shallow circuits and concluded that the quantum circuit with global measurement-based cost functions ($O$ acting on all qubits) can suffer from BPs regardless of depth. The reason behind this suppression is that a global cost function compares two states in an exponentially large Hilbert space, which may lead to an exponentially small overlap, unidentifiable given that the number of measurements available scales only as ${\rm poly}(n)$.
While employing local cost functions in shallow circuits offers a BP mitigation strategy \cite{cerezo_cost_2021}, this approach risks eliminating potential quantum advantage. Specifically, since circuits limited to $\mathcal{O}(\log(n))$ layers are often classically simulable in polynomial time using tensor network techniques \cite{jozsa_SimulationQuantumCircuits_2006,leone_PracticalUsefulnessHardware_2022,cerezo_DoesProvableAbsence_2023}.
BPs have also been identified in the context of tensor network-based machine-learning models \cite{BP_tensor} and quantum circuits with high amounts of entanglement \cite{BP_ent}.
BPs are closely related to narrow gorges in a cost function landscape formed when the volumetric fraction of the parameter space below a certain value is exponentially suppressed. Rigorously, \emph{cost concentration} is defined according to Ref.~\cite{arrasmith2022equivalence} as follows.
\begin{definition}
    \rm {\bf (Cost concentration \& narrow gorges)}. The cost function defined for QNNs in Eq.~(\ref{eq:qnn_cost}) exhibits cost concentration if
    \begin{equation}
        \mathrm{Var}_{\boldsymbol{\theta}_{A}}[\mathbb{E}_{\boldsymbol{\theta}} [C(\boldsymbol{\theta},\mathbf{x}_i)]-C(\boldsymbol{\theta}_{A},\mathbf{x}_i)] \leq G(n),\quad  G(n) \in O(1/b^n), \quad b > 1.
\label{cc}
    \end{equation}
    where $\boldsymbol{\theta}_A$ is a random draw from the uniform probability distribution over the parameter space.
    Furthermore, the cost function defined for QNNs in Eq.~(\ref{eq:qnn_cost}) exhibits a narrow gorge if there exists a local minimum lower than the mean cost by at least $\delta(n)>0$ with $\delta(n) \in \Omega(1/{\rm poly}(n))$. The probability that the cost function value deviates from its mean by at least $\delta$ at a randomly chosen point from a uniform distribution over the parameter space is bounded and
    \begin{equation}
        \mathrm{Pr}_{\boldsymbol{\theta}_A}[ \mid \mathbb{E}_{\boldsymbol{\theta}} [C(\boldsymbol{\theta},\mathbf{x}_i)]-C(\boldsymbol{\theta}_{A},\mathbf{x}_i) \mid \geq \delta ] \mid \leq  \frac{G(n)}{\delta^2},\quad G(n) \in O(1/b^n), \quad b > 1
    \end{equation}
\end{definition}
\textcolor{black}{Importantly, Ref.~\cite{arrasmith2022equivalence} proves that, for parameterized quantum circuit cost landscapes,  exponentially vanishing gradients and exponential cost concentration about the mean, occur together. Therefore, the gradient-based BP definition above can be equivalently reformulated as cost concentration of the form in Eq.~\eqref{cc} and has become a widely adopted standard formulation of BPs \cite{larocca2024review}. In the remainder of this work, and in particular in Theorem~\ref{theorem1}, we use this cost-concentration formulation of BPs. This choice is natural for our purpose because exponential concentration of a quantum kernel is also a statement about the variance of a function value, namely the kernel entry \(\kappa(\mathbf{x},\mathbf{x}')\), rather than about gradients.} Note that the notation  $f(n)\in \mathcal{O}(g(n))$ and $f(n)\in \Omega(g(n))$ specifies that $f(n)$ is asymptotically bounded above and below by $g(n)$, respectively.

\section{Barren plateaus vs exponential concentration }
\label{sec: qml_results}
We relate EC in non-parametric (no trainable parameters) quantum kernels with the BP problem in QNNs. Thus, we are concerned with the exponential concentration of the actual kernel function and, hence, kernel matrix values.
We begin with a QNN with a BP. We consider the cost function given by Eq.~(\ref{eq:qnn_cost}) with fixed $\mathbf{x}_i$. Following Eq. (\ref{cc}), we can write
\begin{align}
    &\forall \mathbf{x}_i \in\mathcal{X}:\quad \mathrm{Var}_{\boldsymbol{\theta}_{A}}[\mu_{\mathbf{x}_i}-C(\boldsymbol{\theta}_{A},\mathbf{x}_i)] \leq F(n), \quad  F(n) \in \mathcal{O}\left(\frac{1}{b^n}\right) \label{eq: qNN_BP_bound}
\end{align}
where  $\mu_{\mathbf{x}_i}=\mathbb{E}_{\boldsymbol{\theta}} [C(\boldsymbol{\theta},\mathbf{x}_i)]$.
We propose the following construction of a quantum kernel: first, we replace the variational part of the circuit $W(\boldsymbol{\theta})$ with the data embedding $U^\dagger(\textbf{x}_i)$\footnote{Without loss of generality, the data $\mathcal{X}$ is assumed to be normalized and standardized to the range $[-\pi,\pi]$, which matches the typical range of parameters $\boldsymbol{\theta}$. \textcolor{black}{Importantly, this algebraic substitution assumes the data-induced unitary ensemble shares the concentration properties of the original parameter space.}}, and, second, we choose the measurement operator $O$ to be the initial state $\rho_0$. In this way, we recover the state preparation $\rho(\textbf{x})=W(\textbf{x})\rho_0 W^{\dagger}(\textbf{x})$ of a data point  $\textbf{x}' \in \mathcal{X}$ through the variational part of the QNN. This leads to the quantum kernel in the form
\begin{equation}
    \kappa(\textbf{x},\textbf{x}^{\prime})=\text{Tr}[W(\textbf{x})\rho_0 W^{\dagger}(\textbf{x})W(\textbf{x}^{\prime})\rho_0 W^{\dagger}(\textbf{x}^{\prime})].
    \label{eq: new_kernel}
\end{equation}
We next relate the exponential (polynomial) upper (lower) bound on the cost function of the QNN to EC of the corresponding kernels. To do this, we state the following lemmas with proofs provided in Appendix \ref{app:proofs_lemmas}.

\begin{lemma}\label{lemma:total_var_upper} \rm
   If there exists an upper bound under variation of the random variable $\mathbf{x}_j$,
    \begin{equation}
        \forall \mathbf{x}_i\!\!:\quad \operatorname{Var}_{\mathbf{x}_j}\left[f\left(\mathbf{x}_i, \mathbf{x}_j\right) \mid \mathbf{x}_i \right] \leq F(n)
    \end{equation}
    with function $f$ being symmetric in $\mathbf{x}_i, \mathbf{x}_j$,
    then this upper bound is doubled for the total variance:
    \begin{equation}
        \operatorname{Var}_{\mathbf{x}_i, \mathbf{x}_j}\left[f\left(\mathbf{x}_i, \mathbf{x}_j\right) \right] \leq 2F(n)
    \end{equation}
\end{lemma}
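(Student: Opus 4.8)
The plan is to apply the law of total variance to the symmetric function $f$, treating $\mathbf{x}_i$ and $\mathbf{x}_j$ as independent, identically distributed draws from the data distribution. Writing $g(\mathbf{x}_i) := \mathbb{E}_{\mathbf{x}_j}[f(\mathbf{x}_i, \mathbf{x}_j) \mid \mathbf{x}_i]$ for the conditional mean, the decomposition reads
\begin{equation}
\operatorname{Var}_{\mathbf{x}_i, \mathbf{x}_j}[f] = \mathbb{E}_{\mathbf{x}_i}\!\left[\operatorname{Var}_{\mathbf{x}_j}[f \mid \mathbf{x}_i]\right] + \operatorname{Var}_{\mathbf{x}_i}[g(\mathbf{x}_i)].
\end{equation}
The first term is immediately controlled: since the hypothesis bounds $\operatorname{Var}_{\mathbf{x}_j}[f \mid \mathbf{x}_i] \le F(n)$ uniformly in $\mathbf{x}_i$, taking the expectation over $\mathbf{x}_i$ preserves the bound $\le F(n)$. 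The entire difficulty is thus reduced to showing that the second term, the variance of the conditional mean, is also bounded by $F(n)$. The tempting estimate "variance of a conditional expectation $\le$ total variance" is useless here, since its right-hand side is precisely the quantity we are trying to bound.

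The key step, where symmetry enters, is to control $\operatorname{Var}_{\mathbf{x}_i}[g(\mathbf{x}_i)]$ by a conditional variance in the \emph{complementary} variable. Introducing an independent copy $\mathbf{x}_i'$ of $\mathbf{x}_i$ and using the identity $\operatorname{Var}[Y] = \tfrac{1}{2}\mathbb{E}[(Y-Y')^2]$ for i.i.d. copies, I would write $g(\mathbf{x}_i) - g(\mathbf{x}_i') = \mathbb{E}_{\mathbf{x}_j}[f(\mathbf{x}_i, \mathbf{x}_j) - f(\mathbf{x}_i', \mathbf{x}_j)]$ and apply Jensen's inequality to pull the square inside the expectation over $\mathbf{x}_j$. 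Recognizing the resulting object, for each fixed $\mathbf{x}_j$, as $\operatorname{Var}_{\mathbf{x}_i}[f(\mathbf{x}_i, \mathbf{x}_j) \mid \mathbf{x}_j]$, this yields
\begin{equation}
\operatorname{Var}_{\mathbf{x}_i}[g(\mathbf{x}_i)] \le \mathbb{E}_{\mathbf{x}_j}\!\left[\operatorname{Var}_{\mathbf{x}_i}[f(\mathbf{x}_i, \mathbf{x}_j) \mid \mathbf{x}_j]\right].
\end{equation}
Now symmetry of $f$ together with the identical distribution of $\mathbf{x}_i$ and $\mathbf{x}_j$ converts the hypothesis $\operatorname{Var}_{\mathbf{x}_j}[f \mid \mathbf{x}_i] \le F(n)$ into $\operatorname{Var}_{\mathbf{x}_i}[f \mid \mathbf{x}_j] \le F(n)$: fixing the conditioned argument to a constant $c$ and using $f(\mathbf{x}_i,c)=f(c,\mathbf{x}_i)$ reduces one to the other. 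Hence the right-hand side above is $\le F(n)$.

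Adding the two contributions gives $\operatorname{Var}_{\mathbf{x}_i, \mathbf{x}_j}[f] \le 2F(n)$, as claimed. I expect the main obstacle to be the second step: one must resist the circular estimate and instead establish the Jensen-plus-exchange inequality that trades the variance of the conditional mean for a conditional variance in the other variable. It is worth emphasizing that symmetry is genuinely necessary. For the non-symmetric $f(\mathbf{x}_i, \mathbf{x}_j) = a(\mathbf{x}_i)$ the conditional variance $\operatorname{Var}_{\mathbf{x}_j}[f \mid \mathbf{x}_i]$ vanishes while $\operatorname{Var}_{\mathbf{x}_i, \mathbf{x}_j}[f] = \operatorname{Var}[a]$ can be arbitrarily large, so the factor-of-two statement has no chance without exploiting $f(\mathbf{x}_i, \mathbf{x}_j) = f(\mathbf{x}_j, \mathbf{x}_i)$.
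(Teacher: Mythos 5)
Your proof is correct, and it shares the paper's skeleton---the law of total variance, the trivial bound on $\mathbb{E}_{\mathbf{x}_i}\left[\operatorname{Var}_{\mathbf{x}_j}[f \mid \mathbf{x}_i]\right]$, and a symmetry exchange to control the cross term---but you derive the key estimate on $\operatorname{Var}_{\mathbf{x}_i}[g(\mathbf{x}_i)]$ by a different mechanism. The paper expands this variance into a double integral of covariances $\operatorname{Cov}\left[f(\mathbf{x}_i,\mathbf{x}_j'), f(\mathbf{x}_i,\mathbf{x}_j'')\right]$ and applies the Cauchy--Schwarz inequality to bound each covariance by $\sqrt{\operatorname{Var}_{\mathbf{x}_i}[f(\mathbf{x}_i,\mathbf{x}_j') \mid \mathbf{x}_j'] \operatorname{Var}_{\mathbf{x}_i}[f(\mathbf{x}_i,\mathbf{x}_j'') \mid \mathbf{x}_j'']} \leq F(n)$, invoking symmetry of $f$ exactly where you do. Your symmetrization route---$\operatorname{Var}[Y] = \tfrac{1}{2}\mathbb{E}[(Y-Y')^2]$ for an independent copy, Jensen's inequality in $\mathbf{x}_j$, then Fubini---yields the intermediate inequality $\operatorname{Var}_{\mathbf{x}_i}[g(\mathbf{x}_i)] \leq \mathbb{E}_{\mathbf{x}_j}\left[\operatorname{Var}_{\mathbf{x}_i}[f(\mathbf{x}_i,\mathbf{x}_j) \mid \mathbf{x}_j]\right]$, which controls the cross term by the \emph{mean} of the complementary conditional variance rather than, in effect, its supremum; the two coincide once the uniform hypothesis is applied, so the final constant is the same. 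One genuine advantage of your version is that it makes the i.i.d.\ assumption on $\mathbf{x}_i, \mathbf{x}_j$ explicit, whereas the paper's covariance expansion uses it silently (pulling $p(\mathbf{x}_j' \mid \mathbf{x}_i)$ outside the variance over $\mathbf{x}_i$ is only legitimate when this density does not depend on $\mathbf{x}_i$, i.e., under independence). Your closing counterexample $f(\mathbf{x}_i,\mathbf{x}_j) = a(\mathbf{x}_i)$ is also correct and worthwhile: it shows the symmetry hypothesis cannot be dropped, a point the paper's proof does not address.
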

\begin{lemma}\label{lemma:total_var_lower} \rm
     If there exists a lower bound under variation of the random variable $\mathbf{x}_j$,
    \begin{equation}
        \forall \mathbf{x}_i\!\!:\quad \operatorname{Var}_{\mathbf{x}_j}\left[f\left(\mathbf{x}_i, \mathbf{x}_j\right) \mid \mathbf{x}_i \right] \geq G(n)
    \end{equation}
    then this lower bound also holds for the total variance:
    \begin{equation}
        \operatorname{Var}_{\mathbf{x}_i, \mathbf{x}_j}\left[f\left(\mathbf{x}_i, \mathbf{x}_j\right) \right] \geq G(n)
    \end{equation}
\end{lemma}

\noindent
 Given Lemmas \ref{lemma:total_var_upper} and \ref{lemma:total_var_lower}, we state the theorem, which relates the bounds for QNN to the bounds for the corresponding quantum kernel, with a proof provided in Appendix \ref{app:proof_main_theorem}.

\begin{theorem}\label{theorem1} \rm
    Given a QNN as defined in Eq.~(\ref{eq:qnn_cost}) that follows either an asymptotic exponential upper bound (BP landscape) or polynomial lower bound (BP-free landscape), there exists a quantum kernel as defined by Eq.~(\ref{eq: new_kernel}) such that the variance bounds transfer to the quantum kernel variance.
    Formally, for the QNN (Eq.~(\ref{eq:qnn_cost})) exhibiting a BP landscape, i.e.,
    \begin{equation}\label{eq: up_bound}
        \exists F(n) \forall \mathbf{x}\!: \quad
        \mathrm{Var}_{\boldsymbol{\theta}}[C(\boldsymbol{\theta},\mathbf{x})] \leq F(n)
        \quad \text{with} \quad
        F(n) \in \mathcal{O}\left(\frac{1}{b^n}\right)\!,\; b > 1,
    \end{equation}
    the corresponding quantum kernel (Eq.~(\ref{eq: new_kernel})) concentrates exponentially
    \begin{equation}
        \mathrm{Var}_{\textbf{x},\textbf{x}^{\prime}}[\kappa(\textbf{x},\textbf{x}^{\prime})] \leq 2F(n).
    \end{equation}
    Vice versa, if the QNN cost function (Eq.~(\ref{eq:qnn_cost})) is BP-free, i.e.,
    \begin{equation}\label{eq: lw_bound}
        \exists F(n) \forall \mathbf{x}\!: \quad
        \mathrm{Var}_{\boldsymbol{\theta}}[C(\boldsymbol{\theta},\mathbf{x})] \geq G(n)
        \quad \text{with} \quad
        G(n) \in \Omega\left(\frac{1}{\mathrm{poly}(n)}\right),
    \end{equation}
    the corresponding quantum kernel (Eq.~(\ref{eq: new_kernel})) does not concentrate exponentially
    \begin{equation}
        \mathrm{Var}_{\textbf{x},\textbf{x}^{\prime}}[\kappa(\textbf{x},\textbf{x}^{\prime})] \geq G(n)
        .
    \end{equation}
\end{theorem}
\noindent
Note that this theorem establishes equivalence by also covering the reverse relation: the two implications are logically equivalent to their contrapositives, i.e., the absence of EC implies the absence of BPs, and EC implies BPs, respectively. Consequently, for any QNN and quantum kernel corresponding to each other according to our framework, they will either both suffer from BPs and EC or both be free from BPs and EC. \textcolor{black}{We emphasize that Theorem~\ref{theorem1} is not a statement about arbitrary quantum kernels and arbitrary QNNs. The equivalence of concentration bounds holds for the kernel family constructed from the same ansatz according to Eq.~\eqref{eq: new_kernel}, with the observable choice $O=\rho_0$ and with the variance assumptions stated in Eqs.~\eqref{eq: up_bound} and \eqref{eq: lw_bound}. Thus, the theorem provides a recipe for transferring BP or BP-free guarantees from a QNN ansatz to its associated non-parametric quantum kernel, rather than a global equivalence between the two phenomena for unrelated circuit families.}

A physical intuition for this correspondence between BPs and EC lies in the geometry of the underlying vector spaces, following the picture in Cerezo \textit{et al.} \cite{cerezo_DoesProvableAbsence_2023}. In both cases, we effectively scatter objects randomly within a vector space and compare them via inner products, which reflect their similarity. The relevant vector space in this picture is the space of linear operators. For QNNs, both the input state $\rho(\bm{x})$ and the Heisenberg-evolved observable $W^\dagger(\bm{\theta}) O W(\bm{\theta})$ reside in this vector space. Analogously, for quantum fidelity kernels, the inner product is formed between two quantum states, $\rho(\bm{x})$ and $\rho(\bm{x}')$, rather than state and observable operators.
When the dimension of this vector space is exponentially large, the `curse of dimensionality' dictates that these inner products become exponentially small on average, leading simultaneously to BPs and EC. Conversely, restricting the quantum circuit or kernel construction confines this scattering to a polynomially-sized subspace, thereby ensuring that the inner products remain polynomially bounded.

One can argue that the choice of a state as the observable, $O = \rho_0$, is uncommon in typical literature on BPs, where observables are usually decomposed into Pauli measurements \cite{larocca2024review,invariant_schatzki_theoretical_2022,cerezo_cost_2021}
\begin{equation}\label{eq:obs_pauli_decomp}
    O = \sum_{k=1}^p c_k \left(\bigotimes_{i=1}^n \sigma_i^{(k)}\right) \quad \text{with}  \quad c_k \in \mathbb{R},\; \sigma_i^{(k)}\in \{I, X, Y, Z\},
\end{equation}
while we focus on computational basis measurements. However, it is important to note that computational basis states like $\rho_0$ can also be expressed in the Pauli operator basis. For example, on a single qubit, $|0\rangle\langle 0| = (I + Z) / 2$. Via the tensor product, this generalizes to $n$ qubits as
\begin{equation}
    \rho_0 = \sum_{\mathbf{b}\in \{0,1\}^n} \frac{1}{2^n}\bigotimes_{i=1}^n Z^{b_i} = O,
\end{equation}
matching the form of Eq.~\eqref{eq:obs_pauli_decomp}. Signs can be used to express any computational basis state.

\subsection{Numerical demonstration.}

\begin{figure}[h]
  \centering
  \includegraphics[width=0.35\textwidth]{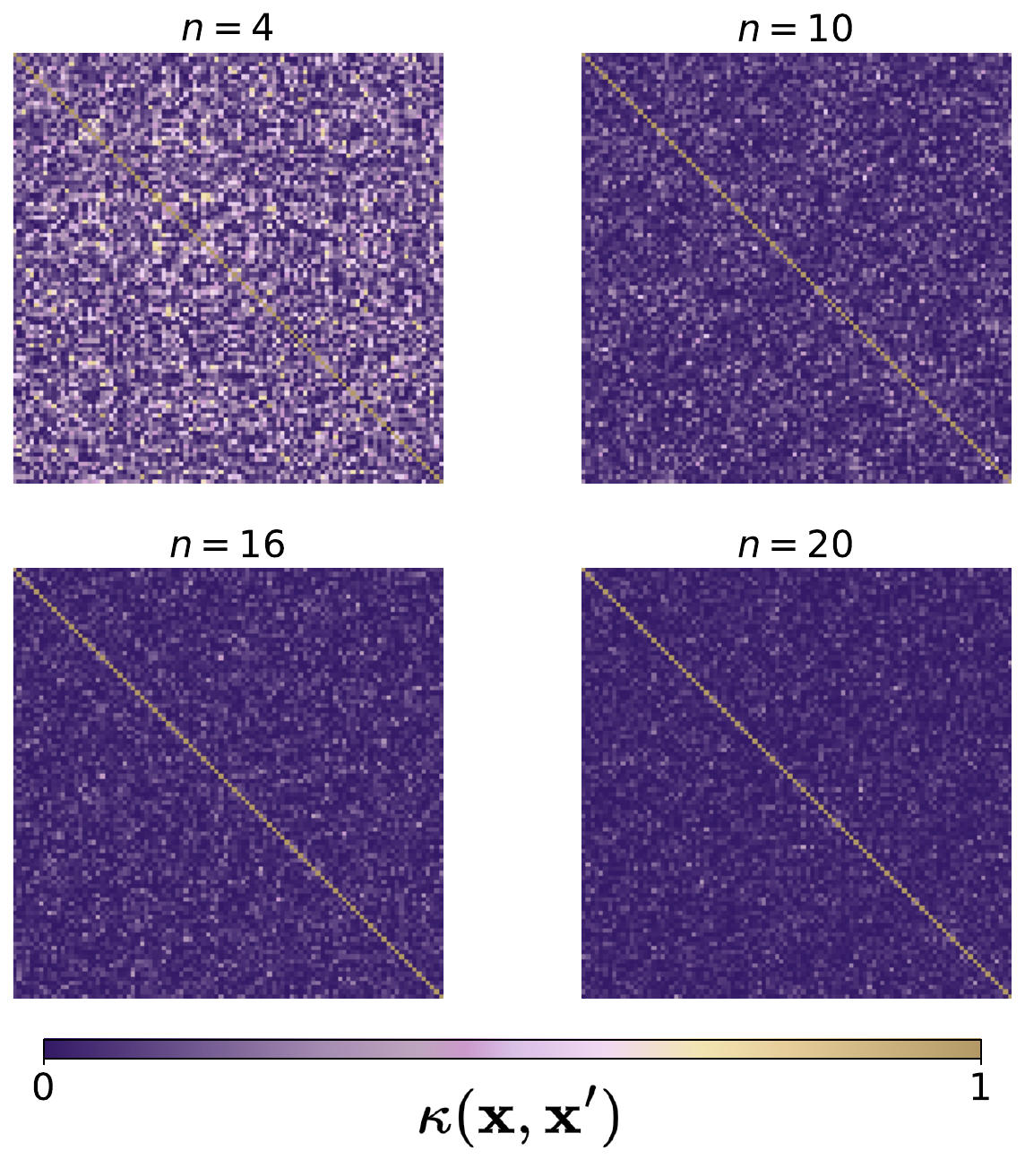}
  \hspace{3mm}
  {\includegraphics[width=0.475\textwidth]{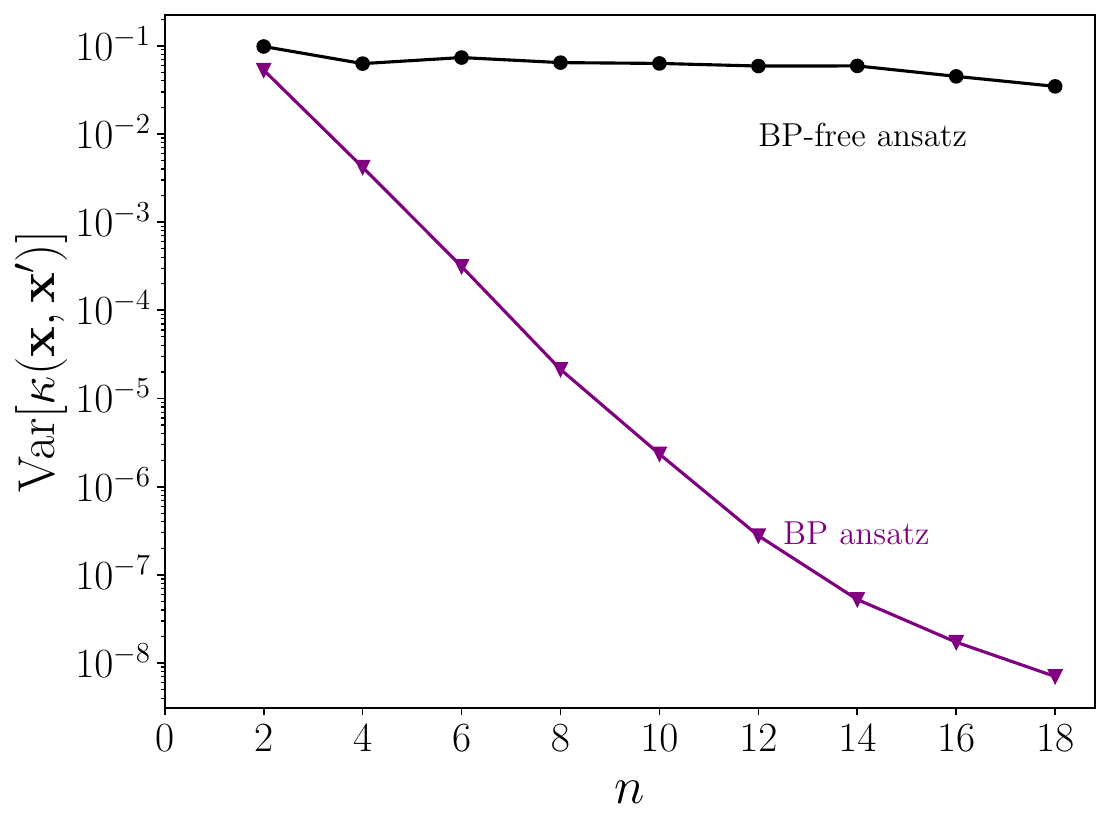}}

  \caption{Kernel matrices computed with the BP-free ansatz \cite{invariant_schatzki_theoretical_2022} for binary classification on the MNIST dataset \cite{mnist}.
  The matrices are constructed for 100 data points with the number of dimensions reduced to the corresponding number of qubits ($n$) by principal component analysis. Each dimension is encoded into a single independent qubit. (Right)  Variances of the kernel matrix elements computed with the BP-free ansatz (black) as a function of number of qubits in the range $n \in [2,20]$. The depth of the circuit is equal to the number of qubits at each point. The purple line corresponds to the Havlíček  \textit{et al.} \cite{havlicek_supervised_2019} considered in Fig. \ref{fig: 1}.
  }
  \label{fig:3}
\end{figure}

To illustrate the results of Theorem \ref{theorem1} by numerical examples, we first consider the ansatz
 proposed in Ref. \cite{invariant_schatzki_theoretical_2022} which is provably BP-free. \textcolor{black}{We emphasize that these numerical examples serve an illustrative purpose; rather than acting as necessary empirical validation for our theory, they showcase how our framework can be used in practice to devise EC-free kernels from BP-free QNNs.} We use Eq.~(\ref{eq: new_kernel}) to construct a quantum kernel with this ansatz and evaluate the kernel matrices for the MNIST dataset considered before in Fig.~\ref{fig: eckernel}. Figure \ref{fig:3} shows the kernel matrices thus obtained for different numbers of qubits $n \in [5, 20]$. The 100 points are chosen randomly from the full dataset and not sorted by class. The kernel matrices in Fig.~\ref{fig:3} exhibit no concentration, as opposed to the kernel matrices in Fig.~\ref{fig: eckernel} where the off-diagonal elements of the kernel matrices become negligible already for $n \leq 10$. To provide a quantitative comparison between the results of Figs. \ref{fig: eckernel} and \ref{fig:3}, we compute the variance of $k({\bf x}, {\bf x}^\prime)$. As shown in the right panel of Fig.~\ref{fig:3}, ${\rm Var}_{{\bf x}, {\bf x}^\prime}[ k({\bf x}, {\bf x}^\prime)]$ computed with the BP-free ansatz decreases only moderately as the number of qubits increases, while the ansatz used for the results in Fig.~\ref{fig: eckernel} leads to an exponential collapse of the variance magnitude. For completeness, Fig.~\ref{fig:cerezo_ansatz} summarizes the two circuit families used for the present numerical calculations: the BP-free ansatz (with layer-wise parameter sharing and $ZZ$-generated two-qubit interactions) and the BP feature map built from single-qubit phase rotations together with CNOT entangling operations. \\

 To illustrate that the observed scaling is not specific to the MNIST dataset, we extend the kernel variance analysis to multiple datasets with qualitatively different intrinsic geometry. In particular, we use four datasets introduced in Ref.~\cite{bowles2024better}: hidden-manifold, two-curves, hyperplanes, and bars-and-stripes. For each dataset, we estimate the kernel variance ${\rm Var}_{{\bf x}, {\bf x}^\prime}[ k({\bf x}, {\bf x}^\prime)]$ as a function of the number of qubits $n$, using the same circuit constructions and scaling choices as in Fig. \ref{fig:cerezo_ansatz}. Specifically, we evaluate the BP-free permutation-equivariant ansatz of Ref.~\cite{invariant_schatzki_theoretical_2022} at depth equal to $n$, and use the feature map from Ref.~\cite{havlicek_supervised_2019} as a BP baseline for reference.



\begin{figure}[ht]
\centering
\begin{minipage}[t]{0.999245\textwidth}
\centering
\fbox{\includegraphics[width=0.95\linewidth]{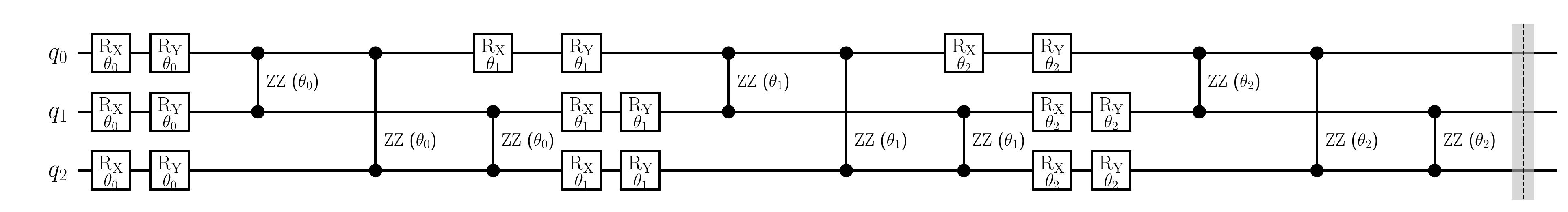}}
\end{minipage}\hfill

\vspace{0.5cm} 

\begin{minipage}[t]{0.95\textwidth}
\centering
\fbox{\includegraphics[width=0.55\linewidth]{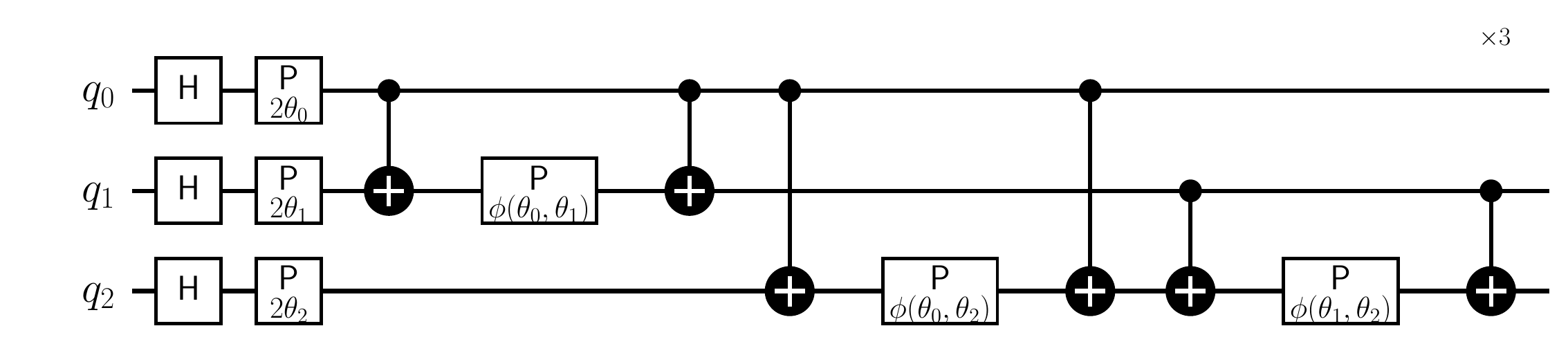}}
\end{minipage}
\caption{Circuit ans\"atze used for the numerical calculations. Top panel: BP-free permutation-equivariant ansatz from Ref.~\cite{invariant_schatzki_theoretical_2022} (example shown for $L=3$ layers on $n=3$ qubits). Gates within a layer share a parameter $\theta_l$. Two-qubit gates are generated by $ZZ$ interactions. Bottom panel: BP ansatz from Ref.~\cite{havlicek_supervised_2019} (the $ZZ$ feature map in Qiskit), built from single-qubit phase rotations and entangling two-qubit CNOT operations. $\phi$ represents the function $\phi(\theta_k,\theta_l)=2 (\pi-\theta_k)(\pi-\theta_l)$}
\label{fig:cerezo_ansatz}
\end{figure}
Figure~\ref{fig:multi_datasets} shows the resulting kernel variances for the four additional datasets. The qualitative behavior observed in Fig. \ref{fig:3} is consistent across all data sets, supporting the conclusion that the observed scaling is primarily a property of the circuit construction and depth scaling, rather than of a particular dataset. Finally, Fig.~\ref{fig:shot_noise} summarizes the impact of finite-shot sampling on the variance of kernel entries. We obtain each kernel entry by estimating the probability of the all-zero state via computational basis measurements, following the quantum kernel construction in Fig.~\ref{fig: 1}. Hence, each entry is obtained from $S$ repeated binary measurements as a sample proportion,
$\hat \kappa_{ij}=m_{ij}/S$ with $m_{ij}\sim\mathrm{Binomial}(S,\kappa_{ij})$, so that
$\mathrm{Var}(\hat \kappa_{ij}\,|\,\kappa_{ij})=\kappa_{ij}(1-\kappa_{ij})/S\le 1/(4S)$. Here $S$ denotes the total number of measurement shots to compute one kernel matrix element. Averaging over data point pairs $(i\neq j)$, this implies the following decomposition of the observed off-diagonal variance:
\begin{equation}
\mathrm{Var}_{i\neq j}[\hat \kappa_{ij}]
\;\approx\;
{\mathrm{Var}_{i\neq j}[\kappa_{ij}]}
\;+\;
{\mathbb{E}_{i\neq j}\!\left[\frac{\kappa_{ij}(1-\kappa_{ij})}{S}\right]},
\label{eq:shot_noise_var_decomp}
\end{equation}
up to correlation effects and finite-sample fluctuations.
The left panel of Fig. \ref{fig:shot_noise} compares $V_{\mathrm{ideal}}$ (statevector computation) to finite-shot estimates for several fixed shot budgets $S$. For the BP-free kernel, the finite-shot curves are nearly indistinguishable from the exact curve over the range shown, indicating that variance due to shot noise is negligible at these polynomial shot budgets. By contrast, for the $ZZ$ feature map the exact variance becomes extremely small with increasing $n$, and at low shot budgets the measured variance can be dominated by sampling noise, leading to a visible separation from the exact curve (e.g., already at $n=10$ for $S\lesssim 256$). The right panel provides direct evidence for Eq.~\eqref{eq:shot_noise_var_decomp} by comparing the measured finite-shot variance at fixed $n=10$ to the prediction $V_{\mathrm{pred}}(S)=V_{\mathrm{ideal}}+\mathbb{E}[\kappa(1-\kappa)]/S$. Note that there is no general requirement that the variance ``saturates'' at $1/S$ as $n$ increases: the effective shot-noise floor scales as $\mathbb{E}[\kappa_{ij}(1-\kappa_{ij})]/S$, and may itself decay with $n$ when typical overlaps $\kappa_{ij}$ decrease with system size.

\begin{figure}[ht]
\centering
\centering
\includegraphics[width=0.8\linewidth,height=0.8\linewidth,keepaspectratio]{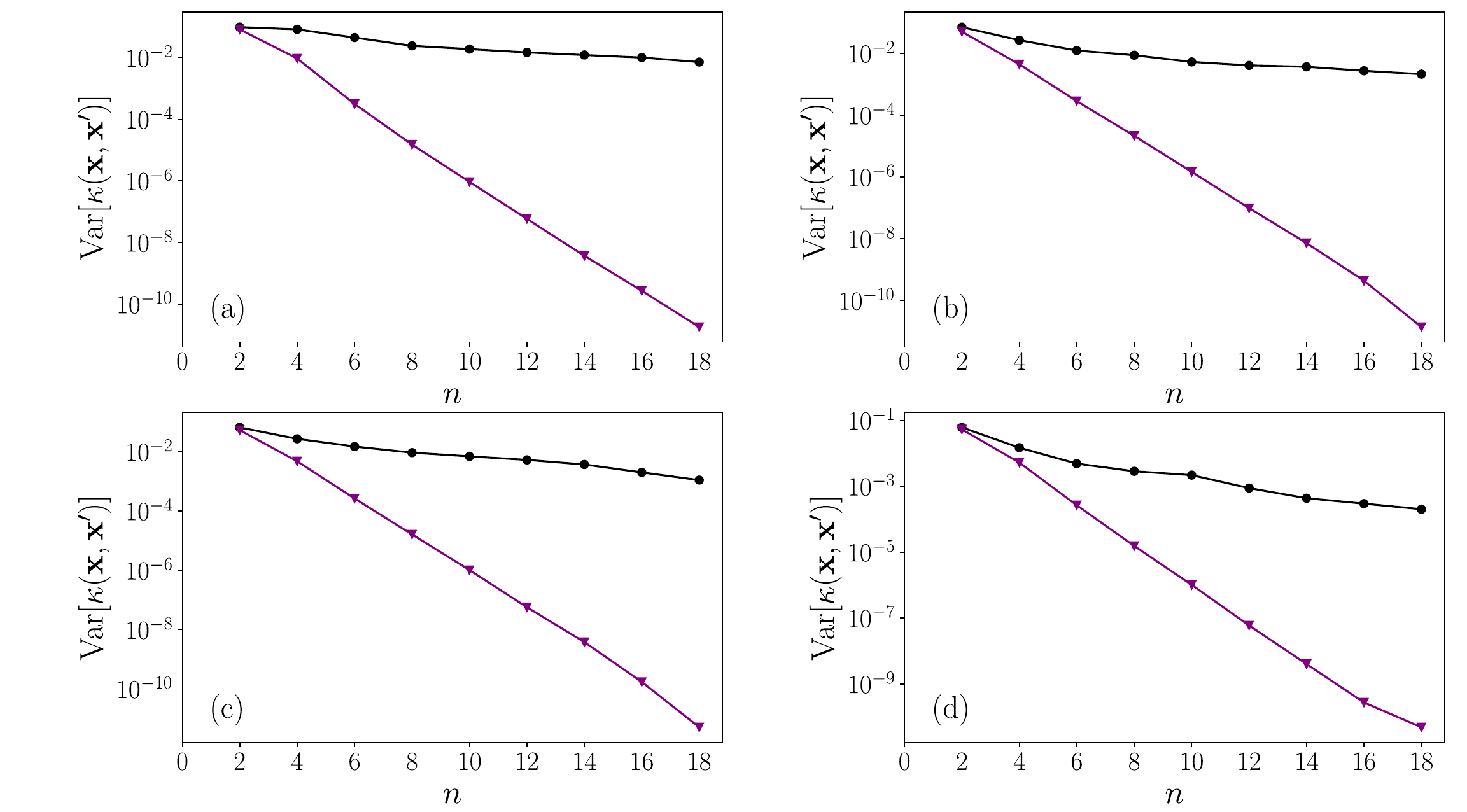}
\caption{Kernel-entry variances as functions of the qubit number for multiple datasets. We use the BP-free ansatz in Fig. \ref{fig:cerezo_ansatz} (circles) and set the circuit depth equal to $n$ for each calculation. The different panels correspond to different datasets from Ref.~\cite{bowles2024better}: (a) bars and stripes, (b) two curves, (c) hyperplanes, and (d) hidden manifold . The triangles are obtained with the feature map from Ref.~\cite{havlicek_supervised_2019} illustrated in Fig. \ref{fig:cerezo_ansatz} [top panel].}
\label{fig:multi_datasets}
\end{figure}

\begin{figure}[ht]
    \centering
    \includegraphics[width=0.5\linewidth]{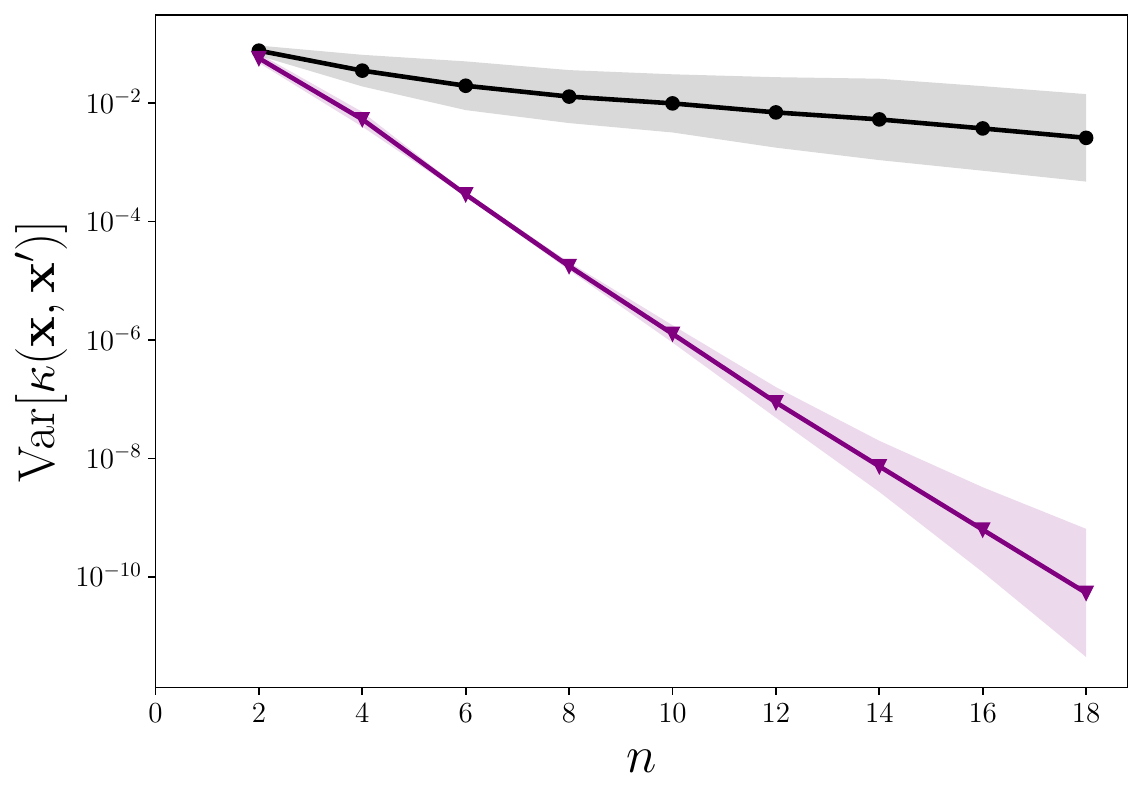}
    \caption{Kernel-entry variance averaged over the five benchmark datasets used in Figs.~\ref{fig: 1} and \ref{fig:multi_datasets}. \textcolor{black}{Error bars summarize the spread across heterogeneous benchmark datasets and are intended as descriptive, not inferential, intervals.}}
    \label{fig:var_mean_ci}
\end{figure}

\begin{figure}[ht]
\centering
\includegraphics[width=\linewidth]{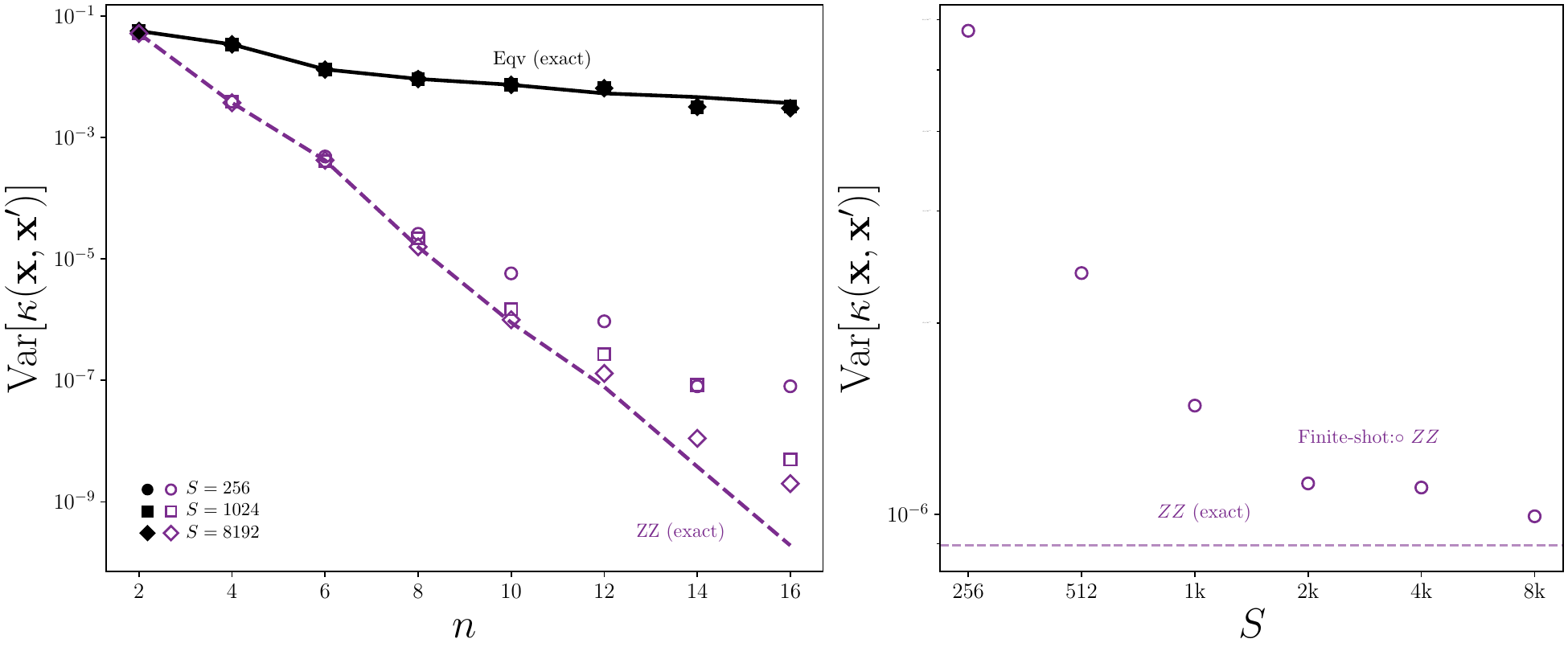}

\caption{Shot-noise effects on the kernel-entry variance. Left: Kernel-entry variances as functions of the qubit number for the BP-free kernel (full symbols) and the $ZZ$ feature map (open symbols), comparing exact (statevector) values to finite-shot estimates for several shot budgets $S\in\{256,\dots,8192\}$ describe each type of symbols. Right: validation of the shot-noise model at fixed $n=10$ by comparing measured finite-shot variances to the prediction $V_{\mathrm{pred}}(S)=V_{\mathrm{ideal}}+\mathbb{E}[\kappa(1-\kappa)]/S$.
}
\label{fig:shot_noise}
\end{figure}

 It is important to note that, despite apparent similarities, the EC-free kernel construction derived and analyzed here is not based on the principle of covariant quantum kernels \cite{glick_covariant_2022}. Covariant quantum kernels assume that the data are induced by an underlying group structure. In contrast, we derive permutation-invariant quantum kernels from permutation-invariant QNNs \cite{invariant_schatzki_theoretical_2022}, assuming that the data's semantic information remains unchanged under the action of the group. For instance, such an assumption applies if the input data vectors represent unordered sets of elements rather than lists where the order of elements is relevant and conveys information. Formally, when representing these sets via data vectors, any permutation of the components in retains the original meaning. Hence, the data is invariant under permutations realized through the action of the symmetric group.

\section{Conclusion}
We have shown that a quantum neural network can be used to build a quantum kernel that inherits the concentration properties of the QNN cost function.
More specifically, our results prove that the cost function concentration of a parametrized quantum circuit is equivalent to concentration of the (non-parametric) kernel matrix.
Our results imply that a QNN with a BP leads to quantum kernels that suffer from EC, whereas, more importantly, a BP-free QNN can be used to build a quantum kernel that does not exhibit exponential concentration.
We have illustrated this by a numerical example with an ansatz that was previously shown to be BP-free.
The present work implies that recently proposed algorithms for building BP-free quantum circuits can be applied to construct performant quantum kernels for machine learning with unknown inductive bias. Figures~\ref{fig:multi_datasets} and \ref{fig:shot_noise} highlight that the variance-scaling behavior is robust across datasets and is thus determined by the circuit and depth scaling, rather than the data geometry. Our numerical results also show that finite-shot estimates introduce an additive floor $\sim \mathbb{E}[K(1-K)]/S$, which is negligible for the BP-free kernel at polynomial $S$ but can dominate for the $ZZ$ feature map once the ideal variance becomes extremely small at large $n$.

It must be noted that constructing a quantum kernel from a BP-free QNN, which then does not suffer from exponential concentration, does not guarantee that the resulting quantum kernel is practically useful or has an advantage over classical kernels. We base this statement on recent evidence indicating that BP-free QNN constructions likely imply efficient classical simulability of the resulting quantum algorithm \cite{cerezo_DoesProvableAbsence_2023,leone_PracticalUsefulnessHardware_2022,basheer_AlternatingLayeredVariational_2022,Anschuetz2023efficientclassical,chang2026practicalframeworksimulatingpermutationequivariant}.
This is commonly referred to as \emph{de-quantization} of the quantum computing approach and makes the use of a quantum computer unnecessary.
This previous work and the connection between BP and EC proven here suggest that quantum kernels that do not exhibit EC are likely efficiently simulable with classical hardware.
Until a general BP-free QNN with a provable quantum advantage is found, the present work should be interpreted to suggest that quantum fidelity kernels do not offer a quantum advantage, unless they encode inductive bias. This is consistent with the conclusion in Ref. \cite{kubler2021inductive} based on the analysis of the eigenspectra of quantum kernels. This suggests that search for a quantum advantage of quantum kernel-based machine learning should focus on general strategies of encoding inductive bias.


\begin{acknowledgments}
This work was supported by the Natural Sciences and
Engineering Research Council of Canada. We also acknowledge financial support from the Quantum Electronic
Science \& Technology (QuEST) Award given by the Stewart Blusson Quantum Matter Institute. We further acknowledge the NSERC CREATE in Quantum Computing Program, grant number 543245.
\end{acknowledgments}

\bibliography{ep_bc_bibfile}

\clearpage

\appendix

\section{Proofs}

    This appendix presents the proofs for the two lemmas and the main theorem of this work.

    \subsection{Proofs of lemmas}\label{app:proofs_lemmas}

        \setcounterref{lemma}{lemma:total_var_upper} 
        \addtocounter{lemma}{-1}
        \begin{lemma}[\textit{restated}]\rm
            If there exists an upper bound under variation of the random variable $\mathbf{x}_j$,
            \begin{equation}
                \forall \mathbf{x}_i\!\!:\quad \operatorname{Var}_{\mathbf{x}_j}\left[f\left(\mathbf{x}_i, \mathbf{x}_j\right) \mid \mathbf{x}_i \right] \leq F(n)
            \end{equation}
            with function $f$ being symmetric in $\mathbf{x}_i, \mathbf{x}_j$,
            then this upper bound is doubled for the total variance:
            \begin{equation}
                \operatorname{Var}_{\mathbf{x}_i, \mathbf{x}_j}\left[f\left(\mathbf{x}_i, \mathbf{x}_j\right) \right] \leq 2F(n)
            \end{equation}
        \end{lemma}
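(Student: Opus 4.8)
The plan is to apply the law of total variance to split the joint variance into a conditional (``within'') term and a between term, bound the within term directly from the hypothesis, and then use the symmetry of $f$ to control the between term. Throughout I would assume, as is implicit in the statement and consistent with the setup of Theorem~\ref{theorem1}, that $\mathbf{x}_i$ and $\mathbf{x}_j$ are independent draws from a common data distribution, so that the conditional and total variances are well defined and the two arguments of $f$ are identically distributed.

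The law of total variance gives
\begin{equation}
\operatorname{Var}_{\mathbf{x}_i,\mathbf{x}_j}[f] = \underbrace{\mathbb{E}_{\mathbf{x}_i}\!\left[\operatorname{Var}_{\mathbf{x}_j}[f\mid\mathbf{x}_i]\right]}_{W} + \underbrace{\operatorname{Var}_{\mathbf{x}_i}\!\left[\mathbb{E}_{\mathbf{x}_j}[f\mid\mathbf{x}_i]\right]}_{B}.
\end{equation}
The hypothesis bounds the conditional variance pointwise for every $\mathbf{x}_i$, so averaging over $\mathbf{x}_i$ immediately yields $W \le F(n)$. Hence, if I can establish $B \le W$, the claim follows at once from $\operatorname{Var}[f] = W + B \le 2W \le 2F(n)$.

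The crux, which I expect to be the main obstacle, is bounding the between term $B$, i.e.\ the variance of the conditional expectation $g(\mathbf{x}_i) := \mathbb{E}_{\mathbf{x}_j}[f(\mathbf{x}_i,\mathbf{x}_j)\mid\mathbf{x}_i]$. This is exactly where symmetry must enter, and it cannot be avoided: for a non-symmetric $f$ depending only on its first argument one has $W = 0$ while $B$ is unconstrained, so the statement is simply false without the symmetry assumption. My plan is to introduce the centered first-order projection $\phi(\mathbf{x}) = g(\mathbf{x}) - \mu$ with $\mu = \mathbb{E}[f]$ (note $\mathbb{E}[\phi]=0$ and $B = \mathbb{E}[\phi^2]$), together with the symmetrized function $\psi(\mathbf{x}_i,\mathbf{x}_j) = \phi(\mathbf{x}_i) + \phi(\mathbf{x}_j)$, which is the first-order Hoeffding/ANOVA component of $f$. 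Using the symmetry of $f$ (so that projecting onto either argument yields the \emph{same} function $\phi$) and independence, I would verify the two identities $\mathbb{E}[\psi^2] = 2B$ and $\mathbb{E}[(f-\mu)\,\psi] = 2B$.

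These identities make the residual $f - \mu - \psi$ orthogonal to $\psi$, so the Pythagorean relation gives $\operatorname{Var}[f] = \mathbb{E}[(f-\mu)^2] \ge \mathbb{E}[\psi^2] = 2B$, i.e.\ $B \le \tfrac12\operatorname{Var}[f] = \tfrac12(W+B)$, which rearranges to $B \le W$. Combining this with $W \le F(n)$ closes the argument: $\operatorname{Var}_{\mathbf{x}_i,\mathbf{x}_j}[f] = W + B \le 2W \le 2F(n)$. Finally, I would point out that the choice $f(\mathbf{x}_i,\mathbf{x}_j) = h(\mathbf{x}_i)+h(\mathbf{x}_j)$ saturates every inequality above, confirming that the factor of $2$ is tight and cannot be improved.
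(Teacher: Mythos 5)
Your proof is correct, and while it opens with the same law-of-total-variance split as the paper, your handling of the between term $B=\operatorname{Var}_{\mathbf{x}_i}\!\left[\mathbb{E}_{\mathbf{x}_j}[f\mid\mathbf{x}_i]\right]$ is genuinely different. The paper expands $B$ as a double integral of covariances $\operatorname{Cov}\!\left[f(\mathbf{x}_i,\mathbf{x}_j'),f(\mathbf{x}_i,\mathbf{x}_j'')\right]$ over fixed realizations and bounds each by Cauchy--Schwarz, invoking the symmetry of $f$ so that the resulting conditional variances over $\mathbf{x}_i$ fall under the hypothesis with the roles of the arguments swapped; this yields $B\le F(n)$ directly. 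You instead prove the structural inequality $B\le W$ via the first-order Hoeffding/ANOVA projection: your identities $\mathbb{E}[\psi^2]=2B$ and $\mathbb{E}[(f-\mu)\psi]=2B$ (the second is precisely where symmetry and the identical distribution of the two arguments enter) make $f-\mu-\psi$ orthogonal to $\psi$, whence $\operatorname{Var}[f]\ge\mathbb{E}[\psi^2]=2B$ and thus $B\le W\le F(n)$; both identities check out under your explicitly stated i.i.d.\ assumption, which is also implicitly required by the paper's argument (its Cauchy--Schwarz step needs $\mathbf{x}_i$ and $\mathbf{x}_j$ identically distributed, and its factorized densities need independence). What your route buys: $B\le W$ is a bound-free structural fact about symmetric functions of i.i.d.\ inputs; your counterexample with $f$ depending only on its first argument ($W=0$, $B$ unconstrained) pinpoints why symmetry cannot be dropped; and your extremal case $f=h(\mathbf{x}_i)+h(\mathbf{x}_j)$ shows the constant $2$ is tight---a point the paper does not make. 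What the paper's route buys: it is shorter once Cauchy--Schwarz is granted and avoids introducing the projection machinery altogether.
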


        \begin{proof}
            The law of total variance provides the following decomposition
            \begin{equation} \label{eq:law_total_var}
                \operatorname{Var}_{\mathbf{x}_{i},{\mathbf{x}_j}}\left[f\left( \mathbf{x}_{i},\mathbf{x}_j\right)\right]
                = \mathbb{E}_{\mathbf{x}_i}\left[ \operatorname{Var}_{\mathbf{x}_j} \left[ f\left(\mathbf{x}_{i}, \mathbf{x}_j\right) |  {\mathbf{x}_i} \right] \right]
                +
                \operatorname{Var}_{\mathbf{x}_i}\left[ \mathbb{E}_{\mathbf{x}_j} \left[ f\left(\mathbf{x}_{i}, \mathbf{x}_j\right) |  {\mathbf{x}_i} \right] \right],
            \end{equation}
            where both terms are each upper bounded by $F(n)$. The bound on the first term trivially follows from the lemma's condition, and for the second term we decompose the variance into covariances~\cite{schervish2014probability}
            \begin{align}
                \operatorname{Var}_{\mathbf{x}_i}\left[ \mathbb{E}_{\mathbf{x}_j} \left[ f\!\left(\mathbf{x}_{i}, \mathbf{x}_j\right) |  {\mathbf{x}_i} \right] \right]
                &= \operatorname{Var}_{\mathbf{x}_i}\left[ \int_{\mathcal{X}} p(\mathbf{x}_j' | \mathbf{x}_i) f\!\left(\mathbf{x}_{i}, \mathbf{x}_j' \right) d\mathbf{x}_j' \right] \\
                &= \int_{\mathcal{X}}\int_{\mathcal{X}} p(\mathbf{x}_j' | \mathbf{x}_i)p(\mathbf{x}_j'' | \mathbf{x}_i) \operatorname{Cov}\!\left[f\!\left(\mathbf{x}_{i}, \mathbf{x}_j'\right), f\!\left(\mathbf{x}_{i}, \mathbf{x}_j''\right) \right] d\mathbf{x}_j' d\mathbf{x}_j'' ,
            \end{align}
            where the primes indicates realizations $\mathbf{x}'_j \in \mathcal{X}$ of the random variables $\mathbf{x}_j$ over $\mathcal{X}$.
            The {Cauchy-Schwarz inequality} provides upper bounds on the covariances \cite{schervish2014probability}
            \begin{equation}
                \sqrt{
                    \operatorname{Var}_{\mathbf{x}_i}\left[f\left(\mathbf{x}_i, \mathbf{x}_j'\right) \mid \mathbf{x}_j' \right]
                    \operatorname{Var}_{\mathbf{x}_i}\left[f\left(\mathbf{x}_i, \mathbf{x}_j''\right) \mid \mathbf{x}_j'' \right]
                },
            \end{equation}
            which, in turn, are upper bounded by $F(n)$ as per the lemma's condition and the symmetry of function $f$ in its arguments. Since $F(n)$ exists independently of realizations $\mathbf{x}_j', \mathbf{x}_j''$, the integrals trivially simplify, which concludes the proof by applying the derived bounds to Eq.~\eqref{eq:law_total_var}
            \begin{align}
                \operatorname{Var}_{\mathbf{x}_{i},{\mathbf{x}_j}}\left[f\left( \mathbf{x}_{i},\mathbf{x}_j\right)\right]
                &= \mathbb{E}_{\mathbf{x}_i}\left[ \operatorname{Var}_{\mathbf{x}_j} \left[ f\left(\mathbf{x}_{i}, \mathbf{x}_j\right) |  {\mathbf{x}_i} \right] \right]
                +
                \operatorname{Var}_{\mathbf{x}_i}\left[ \mathbb{E}_{\mathbf{x}_j} \left[ f\left(\mathbf{x}_{i}, \mathbf{x}_j\right) |  {\mathbf{x}_i} \right] \right] \\
                &\leq F(n) + F(n)\\
                &= 2 F(n)
            \end{align}
        \end{proof}

        \setcounterref{lemma}{lemma:total_var_lower} 
        \addtocounter{lemma}{-1}
        \begin{lemma}[\textit{restated}] \rm
             If there exists a lower bound under variation of the random variable $\mathbf{x}_j$,
            \begin{equation}
                \forall \mathbf{x}_i\!\!:\quad \operatorname{Var}_{\mathbf{x}_j}\left[f\left(\mathbf{x}_i, \mathbf{x}_j\right) \mid \mathbf{x}_i \right] \geq G(n)
            \end{equation}
            then this lower bound also holds for the total variance:
            \begin{equation}
                \operatorname{Var}_{\mathbf{x}_i, \mathbf{x}_j}\left[f\left(\mathbf{x}_i, \mathbf{x}_j\right) \right] \geq G(n)
            \end{equation}
        \end{lemma}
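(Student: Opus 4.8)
The plan is to invoke the same law of total variance that powered the proof of Lemma~\ref{lemma:total_var_upper}, but now to read off a lower bound rather than an upper bound. Concretely, I would start from the decomposition
\begin{equation*}
    \operatorname{Var}_{\mathbf{x}_{i},{\mathbf{x}_j}}\left[f\left( \mathbf{x}_{i},\mathbf{x}_j\right)\right]
    = \mathbb{E}_{\mathbf{x}_i}\left[ \operatorname{Var}_{\mathbf{x}_j} \left[ f\left(\mathbf{x}_{i}, \mathbf{x}_j\right) |  {\mathbf{x}_i} \right] \right]
    + \operatorname{Var}_{\mathbf{x}_i}\left[ \mathbb{E}_{\mathbf{x}_j} \left[ f\left(\mathbf{x}_{i}, \mathbf{x}_j\right) |  {\mathbf{x}_i} \right] \right],
\end{equation*}
exactly as in Eq.~\eqref{eq:law_total_var}, and then bound the two summands separately.

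The key observation is an asymmetry that makes this direction strictly easier than Lemma~\ref{lemma:total_var_upper}: neither symmetry of $f$ nor a Cauchy--Schwarz bound on covariances is required. For the first term, the lemma's hypothesis supplies $\operatorname{Var}_{\mathbf{x}_j}[f(\mathbf{x}_i,\mathbf{x}_j)\mid\mathbf{x}_i]\geq G(n)$ for every fixed $\mathbf{x}_i$; since $G(n)$ is a constant independent of $\mathbf{x}_i$, monotonicity of the expectation immediately yields $\mathbb{E}_{\mathbf{x}_i}[\operatorname{Var}_{\mathbf{x}_j}[f\mid\mathbf{x}_i]]\geq G(n)$. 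For the second term, I would simply note that it is itself a variance and is therefore non-negative, so it can only help the lower bound.

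Adding the two contributions gives $\operatorname{Var}_{\mathbf{x}_i,\mathbf{x}_j}[f]\geq G(n)+0=G(n)$, which is the claim. The only point that warrants a moment of care --- and the nearest thing to an obstacle --- is justifying that the conditional lower bound survives the average over $\mathbf{x}_i$; this is immediate precisely because the bound $G(n)$ is uniform in $\mathbf{x}_i$, so integrating the pointwise inequality against the probability measure of $\mathbf{x}_i$ preserves it. Everything else rests on the non-negativity of variance, so there is no genuine difficulty beyond correctly invoking the variance decomposition.
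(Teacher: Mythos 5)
Your proof is correct and is essentially the paper's own argument: the paper expands $\operatorname{Var}_{\mathbf{x}_i,\mathbf{x}_j}[f]$ via the law of total expectation and applies Jensen's inequality, which is precisely an inline derivation of the law of total variance in which the nonnegative term $\operatorname{Var}_{\mathbf{x}_i}\left[\mathbb{E}_{\mathbf{x}_j}\left[f(\mathbf{x}_i,\mathbf{x}_j)\mid\mathbf{x}_i\right]\right]$ is discarded --- exactly the term you drop by citing nonnegativity of variance directly. Your observations that symmetry of $f$ is not needed here and that the uniformity of $G(n)$ in $\mathbf{x}_i$ justifies $\mathbb{E}_{\mathbf{x}_i}[G(n)]=G(n)$ both match the paper's treatment.
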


        \begin{proof}
        \begin{align}
            \operatorname{Var}_{\mathbf{x}_{i},{\mathbf{x}_j}}\left[f\left(\mathbf{x}_{i}, \mathbf{x}_j\right)\right] &=\mathbb{E}_{\mathbf{x}_{i}, \mathbf{x}_j}\left[f\left(\mathbf{x}_{i,}, \mathbf{x}_j\right)^2\right]-\mathbb{E}_{\mathbf{x}_{i}, \mathbf{x}_j}\left[f\left(\mathbf{x}_i, \mathbf{x}_j\right)\right]^2 \\
            &=\mathbb{E}_{\mathbf{x}_i}\left[\mathbb{E}_{\mathbf{x}_j}\left[f\left(\mathbf{x}_i, \mathbf{x}_j\right)^2 \mid \mathbf{x}_i\right]\right]-\mathbb{E}_{\mathbf{x}_i}\left[\mathbb{E}_{\mathbf{x}_j}\left[f\left(\mathbf{x}_i, \mathbf{x}_j\right) \mid \mathbf{x}_i\right]\right]^2 \label{eq:law_total_expect}\\
            & \geq \mathbb{E}_{\mathbf{x}_i}\left[\mathbb{E}_{\mathbf{x}_j}\left[f\left(\mathbf{x}_i, \mathbf{x}_j\right)^2 \mid \mathbf{x}_i\right]\right]-\mathbb{E}_{\mathbf{x}_i}\left[\mathbb{E}_{\mathbf{x}_j}\left[f\left(\mathbf{x}_i, \mathbf{x}_j\right) \mid \mathbf{x}_i\right]^2\right] \label{eq:jensen}\\
            & =\mathbb{E}_{\mathbf{x}_i}\left[\mathbb{E}_{\mathbf{x}_j}\left[f\left(\mathbf{x}_i, \mathbf{x}_j\right)^2 \mid \mathbf{x}_i\right]-\mathbb{E}_{\mathbf{x}_j}\left[f\left(\mathbf{x}_{i,}, \mathbf{x}_j\right) \mid \mathbf{x}_i\right]^2\right] \\
            & =\mathbb{E}_{\mathbf{x}_i}\left[\operatorname{Var}_{\mathbf{x}_j}\left[f\left(\mathbf{x}_i, \mathbf{x}_j\right) \mid \mathbf{x}_i\right]\right] \\
            & \geq \mathbb{E}_{\mathbf{x}_i}[G(n)]=G(n).
        \end{align}
        The following properties are used: the law of total expectation is used in Eq.~\eqref{eq:law_total_expect}. Jensen's inequality, which states that the convex transformation of a mean is less than or equal to the mean applied after convex transformation, is applied in Eq.~\eqref{eq:jensen} as $x \mapsto x^2$ is a convex function. We also use linearity of expectation.
        \end{proof}

    \subsection{Proof of main theorem}\label{app:proof_main_theorem}

        \setcounterref{theorem}{theorem1} 
        \addtocounter{theorem}{-1}
        \begin{theorem}[\textit{restated}]\rm
            Given a QNN as defined in Eq.~(\ref{eq:qnn_cost}) that follows either an asymptotic exponential upper bound (BP landscape) or polynomial lower bound (BP-free landscape), there exists a quantum kernel as defined by Eq.~(\ref{eq: new_kernel}) such that the variance bounds transfer to the quantum kernel variance.
            Formally, for the QNN (Eq.~(\ref{eq:qnn_cost})) exhibiting a BP landscape, i.e.,
            \begin{equation}
                \exists F(n) \forall \mathbf{x}\!: \quad
                \mathrm{Var}_{\boldsymbol{\theta}}[C(\boldsymbol{\theta},\mathbf{x})] \leq F(n)
                \quad \text{with} \quad
                F(n) \in \mathcal{O}\left(\frac{1}{b^n}\right)\!,\; b > 1,
            \end{equation}
            the corresponding quantum kernel (Eq.~(\ref{eq: new_kernel})) concentrates exponentially
            \begin{equation}
                \mathrm{Var}_{\textbf{x},\textbf{x}^{\prime}}[\kappa(\textbf{x},\textbf{x}^{\prime})] \leq 2F(n).
            \end{equation}
            Vice versa, if the QNN cost function (Eq.~(\ref{eq:qnn_cost})) is BP-free, i.e.,
            \begin{equation}
                \exists F(n) \forall \mathbf{x}\!: \quad
                \mathrm{Var}_{\boldsymbol{\theta}}[C(\boldsymbol{\theta},\mathbf{x})] \geq G(n)
                \quad \text{with} \quad
                G(n) \in \Omega\left(\frac{1}{\mathrm{poly}(n)}\right),
            \end{equation}
            the corresponding quantum kernel (Eq.~(\ref{eq: new_kernel})) does {not} concentrate exponentially
            \begin{equation}
                \mathrm{Var}_{\textbf{x},\textbf{x}^{\prime}}[\kappa(\textbf{x},\textbf{x}^{\prime})] \geq G(n)
                .
            \end{equation}
        \end{theorem}

        \begin{proof}
            The theorem assumes our construction introduced in the main text in Sec.~\ref{sec: qml_results}, summarized in the following conditions:
            \begin{itemize}
                \item same ansatz is chosen for the variational part of the QNN as for data embedding in the quantum kernel $U(\bm{x}) \leftrightarrow W^\dagger(\bm{x})$
                \item The initial state also serves as the observable $ O = \rho_0$
            \end{itemize}

        \noindent
            Under these conditions, Eq.~(\ref{eq: new_kernel}) applies, i.e., we recover the quantum kernel formulation from the QNN.
         Upon substitution $\boldsymbol{\theta} \rightarrow \mathbf{x}_j$, \textcolor{black}{which entails the assumption that the dataset distribution induces a unitary ensemble with analogous concentration properties to the given distribution over the parameter space,} Eq.~(\ref{eq: up_bound}) yields
            \begin{equation}
                \forall \mathbf{x}_i\!:\quad \operatorname{Var}_{\mathbf{x}_j}\left[\kappa \left(\mathbf{x}_i, \mathbf{x}_j\right) \mid \mathbf{x}_i \right] \in \mathcal{O}\left(\frac{1}{b^n}\right), \quad b > 1.
            \end{equation}
            Conversely, if no BPs exist using Eq.~(\ref{eq: lw_bound}) this implies a lower bound on the variance
            \begin{equation}
            \mathbf{x}_i\!:\quad \operatorname{Var}_{\mathbf{x}_j}\left[\kappa \left(\mathbf{x}_i, \mathbf{x}_j\right) \mid \mathbf{x}_i \right] \in \Omega\left(\frac{1}{\mathrm{poly}(n)}\right).
            \end{equation}
            Additionally, recall that by definition, the kernel function is symmetric in its arguments, i.e., $\kappa \left(\mathbf{x}_i, \mathbf{x}_j\right) = \kappa \left(\mathbf{x}_j, \mathbf{x}_i\right)$.
            It directly follows from Lemmas \ref{lemma:total_var_upper} and \ref{lemma:total_var_lower} that these asymptotic bounds also apply to the total variance (where constant factors are absorbed asymptotically)
            \begin{align}
            \operatorname{Var}_{\mathbf{x}_j,\mathbf{x}_i}\left[\kappa \left(\mathbf{x}_i, \mathbf{x}_j\right) \right] &\in \mathcal{O}\left(\frac{1}{b^n}\right) \quad \text{with} \quad b > 1 \quad \text{and} \quad \label{eq: KC_upper}\\
            \operatorname{Var}_{\mathbf{x}_j,\mathbf{x}_i}\left[\kappa \left(\mathbf{x}_i, \mathbf{x}_j\right) \right] &\in \Omega\left(\frac{1}{\mathrm{poly}(n)}\right), \label{eq: KC_lower}
            \end{align}
            respectively, which concludes the proof.
        \end{proof}

\end{document}